\date{}
\newtheorem{theorem}{Theorem}
\newtheorem{lemma}{Lemma}
\newcommand*{\rom}[1]{\expandafter\@slowromancap\romannumeral #1@}
\title{Explicit solutions of the Bass and SI models on hypernetworks}
\author{Gadi Fibich$^1$, Juan G. Restrepo$^2$, and Guy Rothmann$^1$}
\begin{document}
	\maketitle
	\vspace{-3em} 
	\footnotesize
	\begin{flushleft}
		$^1$ Department of Applied Mathematics, Tel Aviv University, fibich@tau.ac.il, guy86222@gmail.com\\
		$^2$ Department of Applied Mathematics, University of Colorado at Boulder, juanga@colorado.edu
	\end{flushleft}
\centerline{
	\small
	\date{\today}
}
\small

\begin{abstract}
	We analyze the Bass model and the Susceptible-Infected (SI)  model on hypergraphs with 3-body interactions. We derive the master equations for general hypernetworks, and use them to obtain explicit expressions for the expected adoption/infection level on infinite complete hypernetworks, infinite Erd\H{o}s-Rényi hypernetworks, and on infinite hyperlines. These expressions are exact, as they are derived without making any approximation.
\end{abstract}

	
\section{Introduction}
Spreading processes on networks have been studied in various research areas, including mathematics, physics, marketing, computer science, and sociology~\cite{barabasi1999emergence}. 
In marketing, the study of the diffusion of innovations began with 
the seminal work of Bass in 1969~\cite{bass1969new}, which inspired a huge body of theoretical and empirical research~\cite{Mahajan-93}. In epidemiology, 
mathematical models have been used to study the spread of infectious diseases in social networks~\cite{anderson1992infectious,keeling2005networks,kiss2017mathematics}.
A key question in these studies has been the role that the network structure plays in the spreading process. 

In the Bass and SI models on networks, the overall rate of peer influences on a susceptible individual 
is the {\em sum} of the influence rates exerted by their peers which are adopters/infected. This assumption is a reasonable starting point. In some cases, however, it is more realistic to use a {\em threshold model} in which the decision to adopt the product  takes place only if the number of adopters exceeds a certain threshold at which the net benefits for adopting the product begin to exceed the net costs~\cite{Granovetter-78}. In other cases, the marginal influence of an adopter may be a decreasing function of the number of adopters who  have already influenced the nonadopter.

In order to allow for a nonlinear dependence of the overall rate of peer influence
on the individual peer influences, it is natural to model these processes on
hypergraphs~\cite{berge1973graphs}.
Indeed, in recent years, extensive research has been devoted to spreading processes on hypergraphs.
For example, Palafox-Castillo et al.\ \cite{PALAFOXCASTILLO2022128053} used the mean-field approach to analyze the steady state of the SIR model on simplicial complexes.
Matamalas et al.\ \cite{Matamalas} used a microscopic Markov chain approximation to find abrupt phase transitions in the SIS model on simplicial complexes.
Kim et al.\ \cite{KimPhysRevE} used the facet approximation on random nested hypergraphs to compute the steady state in the SIS model, and showed that the hyperedge-nestedness affects the phase diagram significantly. 
\mbox{Arenas et al.\ \cite{Arenas}} used the triadic approximation in the SIS model on hypergraphs to demonstrate the double-edged effect of increased overlap between two- and three-body interactions: it decreases the invasion threshold, but also results in generally smaller outbreaks.
Iacopini et al.\ \cite{Iacopini2019} used the mean-field approximation in the SIS model on simplicial complexes and found discontinuous transitions in the steady state, and the emergence of bistable regions where both healthy and endemic states coexists. 
Higham and de Kergorlay~\cite{Higham2021} used the mean-field approximation in the SIS  model on hypergraphs to obtain spectral conditions for the local asymptotic stability of the zero-infection state. 
Arruda et al.\ \cite{deArruda2020} used the assumption that the variables describing the infected state of the nodes are independent in the SIS model on hypergraphs to show that the model exhibits a vast parameter space, including first- and second-order transitions, bistability, and hysteresis.  
Bianconi \cite{Bianconi2021} used the mean-field approximation to derive macroscopic equations in the SIS model on simplicial complexes, and found discontinuous transitions and bistability regions.

In this paper, we analyze the Bass and SI models on hypergraphs.
To simplify the presentation, we only consider hypergraphs with pure 3-body interactions.
The extension of our results and methodology to more general hypergraphs is straightforward.  
 We first derive the master equations for general hypergraphs with pure 3-body interactions.
 We then solve these equations explicitly, and obtain explicit expressions for the expected adoption/infection level as a function of time, for infinite complete hypernetworks, infinite Erd\H{o}s-Rényi hypernetworks, and infinite hyperlines. These expressions are exact, as they are derived without making any approximation. In all cases, we present careful numerical simulations that confirm the validity of the explicit expressions.

     Our work differs from previous studies in several aspects: 
\begin{itemize}
	\item We obtain explicit expressions for the expected adoption/infection level as a function of time,
	 whereas previous studies focused more on the limiting steady-state. 
	 These expressions allow us to address questions such as, e.g., the time for half of the population to become infected.
	  
	 \item The availability of explicit solutions simplifies the analysis, since it is much easier to analyze an explicit solution than to analyze the original stochastic network model.
	 
	\item  From a methodological persective, we start from the full system of the master equations, and solve it exactly, without applying any approximate closure at the level of pairs or triplets. 
	As a result, the explicit solutions that we obtain are exact, and not approximate.
	
	\item   To the best of our knowledge,  this is the first study of the Bass model on hypergraphs. 
	\item   We analyze the SI model and the Bass model using a unified framework.
\end{itemize}

%
%

The paper is organized as follows:
In Section~\ref{Bass_SI_general} we define the Bass and SI models in 3-body hypernetworks. In Section~\ref{master_eqs} we derive the master equations that describe the dynamics of diffusion/infection across hypernetworks. In Section~\ref{complete_hypernetworks} we derive explicit solutions to the master equations for complete 3-body hypernetworks. 
In Section~\ref{initial_dynamics} we explore the initial dynamics of the expected adoption level in the Bass model on arbitrary 3-body hypernetworks, and prove that the adoption rate initially decreases, regardless of the hypernetwork structure or parameters. This is the only case where we observe a qualitative difference between the 
spreading dynamics on graphs and on hypergraphs. 
In Section~\ref{ER_hypernetworks} we examine the spreading dynamics on Erdős-Rényi hypernetworks. In Section~\ref{infinite_hyperlines} we derive explicit solutions to the master equations for infinite 3-body hyperlines, supplemented by numerical simulations that validate the theoretical results.
Section~\ref{sec:final} concludes with some final remarks and suggestions for extensions of this study.

\section{Bass/SI~model on 3-body hypernetworks}
\label{Bass_SI_general}

The Bass model describes the adoption of new products or innovations within a population. In this framework, all individuals start as non-adopters and can transition to becoming adopters due to two types of influences: external factors, such as exposure to mass media, and internal factors where individuals are influenced by their peers who have already adopted the product.
The SI~model is used to study the spreading of infectious diseases within a population. In this model, some individuals are initially infected (the ``patient zero'' cases), and all subsequent infections occur through internal influences, whereby infected individuals transmit the disease to their susceptible peers, and infected individuals remain contagious indefinitely.
In both models, once an individual becomes an adopter/infected, it remains so at all later times.  In particular, she or he remain ``contagious'' forever. It is convenient to unify these two models into a single model,
the Bass/SI~model on networks, as follows. 
The difference between the SI~model and the Bass model is the lack of external influences in the former, and the lack of ``adopters zero'' in the latter.

Consider $M$~individuals, denoted by ${\cal M}:=\{1, \dots, M\}$.
We denote by $X_j(t)$ the state of individual~$j$ at time~$t$, so that
\begin{equation*}
	X_j(t)=\begin{cases}
		1, \qquad {\rm if}\ j\ {\rm is \ adopter/infected \ at\ time}\ t,\\
		0, \qquad {\rm otherwise,}	
	\end{cases}
	\qquad j \in \cal M.
\end{equation*}
The initial conditions at $t=0$ are stochastic, so that
\begin{subequations}
	\label{Bass_SI_hypernetworks_general}
	\begin{equation}
		\label{eq:general_initial}
		X_j(0)=  X_j^0 \in \{0,1\}, \qquad j\in {\cal M},
	\end{equation}
	where
	\begin{equation}
		\mathbb{P}(X_j^0=1) =I_j^0, \quad
		\mathbb{P}(X_j^0=0) =1-I_j^0,\quad I_j^0 \in [0, 1],  \qquad
		j \in \cal M,
	\end{equation}
	and
	\begin{equation}
		\label{eq:p:initial_cond_uncor-two_sided_line}
		\mbox{the random variables $\{X_j^0 \}_{j \in \cal M}$ are independent}.
	\end{equation}
	{\em Deterministic initial conditions} are a special case where
	$I_j^0 \in \{0,1\}$.	
	As long as $j$ is a nonadopter/susceptible, its adoption/infection rate at time~$t$ is~\footnote{\hbox{For comparison, the adoption/infection rate on two-body networks is~$\lambda_j=p_j+\sum_{k=1}^M q_{k,j}X_k(t).$}}
	\begin{equation}
	\label{eq:lambda_j_3body(t)}
	\lambda_j(t) = p_j+\sum\limits_{k_1,k_2=1}^M q_{k_1,k_2,j} X_{k_1}(t) X_{k_2}(t) ,
	\qquad j \in {\cal M}.
	\end{equation}
	Here, $p_j\geq 0$ is the rate of external influences on~$j$, and~$q_{k_1,k_2,j}\geq 0$ is the rate of internal influences by~$k_1$ and~$k_2$ on~$j$, provided that $k_1$ and $k_2$ are already adopters/infected. In addition, $q_{k_1,k_2,j}>0$ if $k_1,k_2$ and $j$ are distinct and the directional hyperdge $\{k_1,k_2\}\rightarrow j$ exists. Otherwise, $q_{k_1,k_2,j}=0$. Once~$j$ becomes an adopter/infected, it remains so at all later times.\footnote{i.e., the only admissible transition is
		$X_j=0 \to X_j=1$.}
	Hence, as $ \Delta t \to 0$,
	\begin{equation}
		\label{eq:general_model}
		\mathbb{P} (X_j(t+\Delta  t )=1  \mid   {\bf X}(t))=
		\begin{cases}
			\lambda_j(t) \, \Delta t , &  {\rm if}\ X_j(t)=0,
			\\
			1,\hfill & {\rm if}\ X_j(t)=1,
		\end{cases}
		\qquad                 j \in {\cal M},
	\end{equation}
	where ${\bf X}(t) := \{X_j(t)\}_{j \in \cal M}$ is the state of the network at time~$t$,	and
	\begin{equation}
		\label{eq:Bass-SI-models-ME-independent}
		\mbox{the random variables $\{X_j(t+\Delta  t )  \mid   {\bf X}(t) \}_{j \in \cal M}$ are independent}.
	\end{equation} 

\end{subequations}

In the Bass model there are no adopters when the product is first introduced into the market, and so $I_j^0 \equiv 0$ and $p_j>0$ for $j\in\mathcal{M}$. In the SI~model there are only internal influences for $t>0$, and so $p_j \equiv 0$ and $I^0_j>0$ for $j\in\mathcal{M}$.

The quantity of most interest is the expected  adoption/infection level
\begin{equation}
	\label{eq:number_to_fraction-general}
	f(t):=	\frac{1}{M} \sum_{j=1}^{M} f_j(t),
\end{equation}
where $f_j :=\mathbb{E}[X_j]$ is the adoption/infection probability of node~$j$.

\section{Master equations}
\label{master_eqs}
The most important analytic tool for the Bass/SI model on networks are the master equations, which were derived in~\cite{fibich2022diffusion}. In this section, we derive the master equations for general hypernetworks with 3-body interactions. 
Let $\Omega \subset  {\cal M}$ be a nontrivial subset of the nodes,  let
$\Omega^{\rm c}:={\cal M} \setminus \Omega$,  and let 
$$
S_{\Omega}(t):=\{X_{m}(t)=0,~m \in \Omega \},
\qquad  [S_{\Omega}](t)
:= \mathbb{P}(S_{\Omega}(t)),
$$
denote the event that all the nodes in~$\Omega$ are nonadopters/susceptibles at time~$t$, and
the probability of this event, respectively.
In what follows, we will use the notations
\begin{equation}
	\label{Set_notation}
	[S_{\Omega,k_1}]:=[S_{\Omega\cup \{k_1\}}],\quad [S_{\Omega,k_1,k_2}]:=[S_{\Omega\cup \{k_1,k_2\}}], \quad p_\Omega:=\sum_{m\in \Omega}p_m, \quad q_{k_1,k_2,\Omega}:= \sum_{m \in \Omega} q_{k_1,k_2,m},
\end{equation}
where $k_1,k_2 \in \Omega ^c$.
	\begin{theorem}
		\label{thm:master-eqs-general-hyper}
		The master equations for the Bass/SI  model~\eqref{Bass_SI_hypernetworks_general} on 3-body hypernetworks are
		\begin{subequations}
			\label{eqs:master-eqs-general-hypergraph}
			\begin{equation}
				\label{eq:master-eqs-general-hypergraph}
					\frac{d[S_{\Omega}]}{dt}= 
					-\left(p_{\Omega}+\sum_{k_1,k_2\in \Omega^{\rm c}} q_{k_1,k_2,\Omega}\right)[S_{\Omega}]+\sum_{k_1,k_2\in \Omega^{\rm c}}q_{k_1,k_2,\Omega}
					\biggl([S_{\Omega,k_1}]+[S_{\Omega,k_2}]-[S_{\Omega,k_1,k_2}]\biggr), 
			\end{equation}
			subject to the initial conditions
			\begin{equation}
				\label{eq:master-eqs-genera-icl-hypergraph}
				[S_{\Omega}](0)=\prod_{m\in \Omega}\left(1-I^0_m\right), 
			\end{equation}
			for all $\emptyset\not=\Omega \subset {\cal M}$.
		\end{subequations}
	\end{theorem}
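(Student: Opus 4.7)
The plan is to compute the time derivative of $[S_\Omega](t)$ by writing $[S_\Omega](t+\Delta t)-[S_\Omega](t)$ in terms of the one-step transition probabilities~\eqref{eq:general_model}, and then passing to the limit $\Delta t\to 0$. The event $S_\Omega(t+\Delta t)$ occurs exactly when $S_\Omega(t)$ occurs and no node $m\in\Omega$ transitions during $(t,t+\Delta t)$. Using the conditional independence stated in~\eqref{eq:Bass-SI-models-ME-independent}, together with the fact that the probability of two simultaneous transitions is $O(\Delta t^2)$, I would obtain
\begin{equation*}
[S_\Omega](t+\Delta t)=[S_\Omega](t)-\Delta t\sum_{m\in\Omega}\mathbb{E}\!\left[\lambda_m(t)\,\mathbbm{1}_{S_\Omega(t)}\right]+O(\Delta t^2),
\end{equation*}
and consequently $\frac{d[S_\Omega]}{dt}=-\sum_{m\in\Omega}\mathbb{E}[\lambda_m(t)\mathbbm{1}_{S_\Omega(t)}]$.

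Next, I would substitute the explicit rate~\eqref{eq:lambda_j_3body(t)}. The main simplification is that on the event $S_\Omega(t)$, $X_{k_i}(t)=0$ for every $k_i\in\Omega$, so every term in the double sum with $k_1$ or $k_2$ inside $\Omega$ drops out. Thus only $k_1,k_2\in\Omega^{\rm c}$ contribute, giving
\begin{equation*}
\mathbb{E}[\lambda_m(t)\mathbbm{1}_{S_\Omega(t)}]=p_m\,[S_\Omega]+\sum_{k_1,k_2\in\Omega^{\rm c}}q_{k_1,k_2,m}\,\mathbb{E}\!\left[X_{k_1}(t)X_{k_2}(t)\mathbbm{1}_{S_\Omega(t)}\right].
\end{equation*}
The key algebraic identity I would then invoke is the inclusion--exclusion style relation $X_{k_1}X_{k_2}=(1-\mathbbm{1}_{S_{k_1}})(1-\mathbbm{1}_{S_{k_2}})=1-\mathbbm{1}_{S_{k_1}}-\mathbbm{1}_{S_{k_2}}+\mathbbm{1}_{S_{k_1,k_2}}$, which after multiplication by $\mathbbm{1}_{S_\Omega}$ and taking expectations yields
\begin{equation*}
\mathbb{E}[X_{k_1}X_{k_2}\mathbbm{1}_{S_\Omega}]=[S_\Omega]-[S_{\Omega,k_1}]-[S_{\Omega,k_2}]+[S_{\Omega,k_1,k_2}].
\end{equation*}
Summing over $m\in\Omega$ and using the notation $p_\Omega$ and $q_{k_1,k_2,\Omega}$ from~\eqref{Set_notation} produces exactly~\eqref{eq:master-eqs-general-hypergraph}.

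The initial condition~\eqref{eq:master-eqs-genera-icl-hypergraph} follows immediately from the independence of $\{X_j^0\}_{j\in\mathcal{M}}$ asserted in~\eqref{eq:p:initial_cond_uncor-two_sided_line}, since $\mathbb{P}(S_\Omega(0))=\prod_{m\in\Omega}\mathbb{P}(X_m^0=0)=\prod_{m\in\Omega}(1-I_m^0)$.

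The only delicate point, and the step I would spell out most carefully, is the bookkeeping when reducing the double sum to $k_1,k_2\in\Omega^{\rm c}$: one must verify that terms with $k_1=m$ or $k_2=m$ (already zero by the convention $q_{k_1,k_2,j}=0$ when indices coincide with $j$), as well as terms with $k_i\in\Omega\setminus\{m\}$, indeed vanish on $S_\Omega(t)$, rather than contribute mixed expectations that do not fit the master-equation form. Once this is cleanly done, the remainder is a direct application of the inclusion--exclusion identity above, which is the genuinely new ingredient relative to the two-body derivation of~\cite{fibich2022diffusion}, where the analogous identity $X_k=1-\mathbbm{1}_{S_k}$ is linear.
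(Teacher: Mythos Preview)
Your proposal is correct and follows essentially the same approach as the paper: both arrive at $\frac{d[S_\Omega]}{dt}=-p_\Omega[S_\Omega]-\sum_{k_1,k_2\in\Omega^{\rm c}}q_{k_1,k_2,\Omega}\,\mathbb{E}[X_{k_1}X_{k_2}\mathbbm{1}_{S_\Omega}]$ and then expand the mixed term via inclusion--exclusion. The only cosmetic difference is that the paper phrases the first step as a rate-of-destruction argument and derives the identity $[S_\Omega\cap I_{k_1,k_2}]=[S_\Omega]-[S_{\Omega,k_1}]-[S_{\Omega,k_2}]+[S_{\Omega,k_1,k_2}]$ by partitioning $S_\Omega$ into four disjoint events, whereas you obtain the same identity algebraically from $(1-\mathbbm{1}_{S_{k_1}})(1-\mathbbm{1}_{S_{k_2}})$.
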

\begin{proof}
	Consider the average over an infinite number of realizations 
of the Bass/SI  model~\eqref{Bass_SI_hypernetworks_general}. 
By definition, the event $S_\Omega(t)$ occurs at
a fraction~$[S_\Omega](t)$ of these realizations.
Since the only allowed transition is $S \to I$, 
new $S_{\Omega}$ realizations cannot be created,
and the existing $S_{\Omega}$~realizations are destroyed 
whenever any of the nodes in~$\Omega$ adopts. 
The adoption rate of node~$m \in \Omega$ is 
$\lambda_{m}(t)$, see~\eqref{eq:lambda_j_3body(t)}. Thus,
\begin{itemize}
	\item An existing $S_{\Omega}$~realization is destroyed if node~$m$ adopts externally. 
	Since there are $[S_{\Omega}]$ such realizations, this external influence leads to a reduction in~$[S_{\Omega}]$ at the rate of $p_m [S_{\Omega}]$.
	
	\item  An existing $S_{\Omega}$~realization is also destroyed if node~$m$ adopts as a result of an internal influence by some nodes~$k_1,k_2\in \Omega^{\rm c}$.
	For this to occur, at time $t$ all nodes of~${\Omega}$ should be nonadopters, and nodes~$k_1,k_2\in  \Omega^{\rm c}$ should be adopters. Denote this event by 
	$S_{\Omega}(t)\cap I_{k_1,k_2}(t)$, and the probability of this event
	by~$[S_{\Omega} \cap I_k](t)$. 
	Since there are $[S_{\Omega} \cap I_{k_1,k_2}](t)$ such realizations, 
	this external influence leads to a reduction  in~$[S_{\Omega}]$ at the rate of~$ q_{k_1,k_2,m} [S_{\Omega} \cap  I_{k_1,k_2}]$.
\end{itemize}
Therefore, the rate of change in  $[S_{\Omega}]$ due to external or internal influences on node~$m$ is
$$
- p_m [S_{\Omega}]
- \sum_{k_1,k_2\in \Omega^{\rm c}} q_{k_1,k_2,m} [S_{\Omega} \cap I_{k_1,k_2}] .
$$
The overall rate of change in  $[S_{\Omega}]$  is the sum of 
the rates of change in~$[S_{\Omega}]$ due to the adoptions of all nodes~$m \in \Omega$.
Hence,

\begin{equation}
	\label{eq:d_dt_[S_Omega]-inter}
	\frac{d[S_{\Omega}]}{dt} = -p_{\Omega}[S_{\Omega}]
- \sum_{k_1,k_2\in \Omega^{\rm c}} q_{k_1,k_2,\Omega} [S_{\Omega} \cap I_{k_1,k_2}].
\end{equation}

In order to express the master equations using only nonadoption probabilities, we first write $S_{\Omega}$ as the union of  four disjoint sets, 
$$
S_{\Omega} = S_{\Omega,k_1,k_2} 
\cup \left(S_{\Omega,k_1} \cap I_{k_2}\right)
\cup \left(S_{\Omega,k_2} \cap I_{k_1}\right)
\cup \left(S_{\Omega} \cap I_{k_1,k_2}\right).
$$
Therefore, 
$$
[S_{\Omega}] = [S_{\Omega ,k_1,k_2}] 
+ [S_{\Omega,k_1} \cap I_{k_2}]
+[S_{\Omega,k_2} \cap I_{k_1}]
+[S_{\Omega} \cap I_{k_1,k_2}].
$$
In addition, $S_{\Omega,k_1}$ can be written as the union of  two disjoint sets,  
$$
S_{\Omega,k_1} = S_{\Omega,k_1,k_2} \cup 
\left(S_{\Omega,k_1} \cap I_{k_2}\right).
$$
Hence, 
$$
 [S_{\Omega,k_1}]  =  [S_{\Omega,k_1,k_2}]+[S_{\Omega,k_1} \cap I_{k_2}].
$$
Similarly, 
$$
[S_{\Omega,k_2}]  =  [S_{\Omega,k_1,k_2}]+[S_{\Omega,k_2} \cap I_{k_1}].
$$
Combining the above, we have that 
\begin{equation}
	\label{eq:[S_Omega_cap_I_k,n]}
	[S_{\Omega} \cap I_{k_1,k_2}] = [S_{\Omega}] -[S_{\Omega,k_1}]-[S_{\Omega,k_2}]+[S_{\Omega,k_1,k_2}]. 
\end{equation}
Equation~\eqref{eq:master-eqs-general-hypergraph} follows from~\eqref{eq:d_dt_[S_Omega]-inter} and~\eqref{eq:[S_Omega_cap_I_k,n]}, and~\eqref{eq:master-eqs-genera-icl-hypergraph} follows from~\eqref{Bass_SI_hypernetworks_general}.
\end{proof}
In general, there are $2^M-1$ master equations in \eqref{eqs:master-eqs-general-hypergraph}, for all possible subsets $\Omega \subset \mathcal{M}$. Therefore, obtaining an explicit solution for general hypernetworks is not practical. In the following, we will obtain a considerably smaller reduced system of master equations for some special hypernetworks.

\section{Complete hypernetworks}
\label{complete_hypernetworks}
In~\cite{Niu2002, fibich2022compartmental}, it was shown that the expected adoption/infection level in the Bass/SI model on infinite complete homogeneous networks is the solution of the equations
\begin{equation}
	\label{2body_Bass_SI_infinite_complete}
	\frac{df}{dt}=(1-f)(p+qf),\qquad f(0)=I^0,
\end{equation}
and is given by $f_\mathrm{Bass}(t):=\frac{1-e^{-\left(p+q\right)t}}{1+\frac{q}{p}e^{-\left(p+q\right)t}}$ for the Bass model, and by $f_\mathrm{SI}(t):=\frac{ I^0}{e^{-q t} + (1 - e^{-q t}) I^0}$ for the SI~model.
In this section, we adopt a similar approach and compute explicitly, without making any approximation, the infinite-population limit of the Bass/SI model on complete 3-body hypernetworks, where
\begin{subequations}
\label{eq:p_j_q_j_complete-homog-hyper}
  	\begin{equation}
	I_j^0\equiv I^0, \qquad p_j\equiv p,\qquad
	q_{k_1,k_2,j}=\frac{q}{\binom{M-1}{2}}
	\mathds{1}_{k_1 \neq k_2,\,j \neq k_1,\,j \neq k_2}
	, \qquad \qquad j,k_1,k_2 \in \mathcal{M},
\end{equation}
$p>0$, $q>0$, and $I^0=0$ in the Bass model and $p=0$, $q>0$, and $0< I^0<1$ in the SI model.
The adoption rate of~$j$ is, see~\eqref{eq:lambda_j_3body(t)}, 
\begin{equation}
	\label{eq:lambda_j(t)-Bass-model-hypergraph-3-body-complete}
	\lambda_j^{\rm complete}(t)
	 = p+	\frac{q}{\binom{M-1}{2}} \sum_{k_1,k_2=1}^M \mathds{1}_{k_1 \neq k_2,\,j \neq k_1,\,j \neq k_2} \, X_{k_1}(t) X_{k_2}(t)
	 =  p+	\frac{q}{\binom{M-1}{2}}{N(t) \choose 2},
\end{equation}
\end{subequations}
where $N(t)=\sum_{j=1}^{M}X_j(t)$ is the number of adopters/infected in the population.
\begin{theorem}
\label{thm:f_complete_infty-hyper}
Let $f^{\rm complete}(t;M)$ denote the expected adoption /infection level in the  Bass/SI  model \textup{(\ref{Bass_SI_hypernetworks_general},\ref{eq:p_j_q_j_complete-homog-hyper})}
on complete 3-body hypernetworks. Let~$f^{\rm complete}_{\infty}:=\lim_{M \to \infty} f^{\rm complete}$. Then~$f^{\rm complete}_{\infty}$ is the solution of the equation
\begin{equation}
	\frac{df}{dt} = (1-f)(p+qf^2), \qquad f(0) = I^0.
	\label{eq:f_complete_infty-hyper}
\end{equation}
Furthermore, $f^{\mathrm{complete}}_{\infty}(t)$ is given by the explicit inverse formula
\begin{subequations}
\label{f_complete_explicit}
\begin{align}
	\label{f_complete_explicit_with_p}
	\left(p+q\right)t	=&\sqrt{\frac{q}{p}}\Biggl(
	\tan^{-1}\left(\sqrt{\frac{q}{p}}f^{\mathrm{complete}}_{\infty}\right)-\tan^{-1}\left(\sqrt{\frac{q}{p}}I^0\right)
	\Biggr)
	+\ln\left(\frac{1-I^0}{1-f^{\mathrm{complete}}_{\infty}}\right) \\
	& +\frac{1}{2}\ln\left(\frac{p+q\left(f^{\mathrm{complete}}_{\infty}\right)^{2}}{p+q(I^0)^{2}}\right), 
	\qquad \qquad \qquad \qquad \qquad \qquad \,\,\,   \mathrm{if} \,\,  p>0, \nonumber
\end{align}
and
\begin{equation}
	\label{f_complete_explicit_SI}
	q t = \ln \left(\frac{f^{\mathrm{complete}}_{\infty}}{1-f^{\mathrm{complete}}_{\infty}} \frac{1-I^0}{I^0}\right) + \frac{1}{I^0} - \frac{1}{f^{\mathrm{complete}}_{\infty}},\qquad \qquad \quad \mathrm{if} \,\,   p=0.
\end{equation}
\end{subequations}

\end{theorem}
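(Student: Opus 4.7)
The plan is to reduce the infinite hierarchy of master equations of Theorem~\ref{thm:master-eqs-general-hyper} by exploiting the full permutation symmetry of the complete homogeneous 3-body hypernetwork, close the reduced system in the $M\to\infty$ limit by a product ansatz, and finally integrate the resulting scalar ODE. Because the weights are identical and the hyperedges are placed symmetrically, $[S_\Omega](t)$ depends on~$\Omega$ only through its cardinality; setting $[S_n](t):=[S_\Omega](t)$ whenever $|\Omega|=n$ and substituting $p_\Omega=np$ together with the combinatorial count of the pairs $(k_1,k_2)\in(\Omega^{\rm c})^2$ contributing to $q_{k_1,k_2,\Omega}$ via~\eqref{eq:p_j_q_j_complete-homog-hyper} collapses \eqref{eq:master-eqs-general-hypergraph} into the triangular scalar system
\begin{equation*}
\frac{d[S_n]}{dt}=-n\bigl(p+q\beta_M(n)\bigr)[S_n]+nq\beta_M(n)\bigl(2[S_{n+1}]-[S_{n+2}]\bigr),
\end{equation*}
whose coefficient $\beta_M(n)$ tends, by the normalization chosen in~\eqref{eq:p_j_q_j_complete-homog-hyper}, to~$1$ as $M\to\infty$ with $n$ fixed. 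Passing to the limit produces an autonomous infinite system with constant coefficients that couples $[S_n]$ only to $[S_{n+1}]$ and $[S_{n+2}]$.

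Next I would close this limiting hierarchy by the product ansatz $[S_n](t)=\bigl([S_1](t)\bigr)^n$. This choice is motivated by the expected asymptotic independence of any finite collection of nodes as $M\to\infty$, and it is already exact at $t=0$ by~\eqref{eq:master-eqs-genera-icl-hypergraph}. Writing $f(t):=1-[S_1](t)$ and substituting $[S_n]=(1-f)^n$ in the $n$-th limiting equation, the bracket $2[S_{n+1}]-[S_{n+2}]$ factors as $(1-f)^{n+1}(1+f)$ by the identity $2-(1-f)=1+f$, the overall $(1-f)^{n-1}$ cancels, and the remaining coefficient simplifies through $(p+q)-q(1-f^2)=p+qf^2$; this yields the $n$-independent scalar equation $f'=(1-f)(p+qf^2)$. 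Thus the ansatz is consistent with every equation of the hierarchy simultaneously, and uniqueness of solutions forces $[S_n](t)=(1-f(t))^n$ for all $n$, so $f_\infty^{\rm complete}$ satisfies \eqref{eq:f_complete_infty-hyper} with $f_\infty^{\rm complete}(0)=I^0$.

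For the explicit inverse formulas, the ODE is separable. Partial fractions give
\begin{equation*}
\frac{1}{(1-\varphi)(p+q\varphi^2)}=\frac{1}{p+q}\left[\frac{1}{1-\varphi}+\frac{q(1+\varphi)}{p+q\varphi^2}\right],
\end{equation*}
and the three antiderivatives $-\ln(1-\varphi)$, $\tfrac12\ln(p+q\varphi^2)$, and $\sqrt{q/p}\,\tan^{-1}\!\bigl(\sqrt{q/p}\,\varphi\bigr)$ combine, after evaluating between $I^0$ and $f_\infty^{\rm complete}$ and multiplying by $p+q$, into formula~\eqref{f_complete_explicit_with_p}. The SI case $p=0$ requires the separate decomposition $\frac{1}{\varphi^2(1-\varphi)}=\frac{1}{\varphi}+\frac{1}{\varphi^2}+\frac{1}{1-\varphi}$, whose elementary integration immediately produces~\eqref{f_complete_explicit_SI}.

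The genuinely nontrivial step is the second one: verifying that the product ansatz is simultaneously consistent with \emph{every} equation of the limiting hierarchy, so that the single ODE for~$f$ captures the dynamics of $[S_1]$ exactly rather than merely approximately. I expect this consistency check, which hinges on the algebraic identity $2(1-f)^{n+1}-(1-f)^{n+2}=(1-f)^{n+1}(1+f)$ and the cancellation $(p+q)-q(1-f^2)=p+qf^2$, to be the main substantive moment; everything afterwards is routine combinatorics, elementary calculus, and a standard ODE uniqueness argument.
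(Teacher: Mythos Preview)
Your proposal is correct and follows essentially the same approach as the paper: exploit permutation symmetry to reduce the master equations to a hierarchy in $n=|\Omega|$, take the $M\to\infty$ limit, close via the product ansatz $[S^n]=[S]^n$, and integrate by the same partial-fraction decomposition. The only minor deviation is that you integrate the $p=0$ case directly via $\frac{1}{\varphi^2(1-\varphi)}=\frac{1}{\varphi}+\frac{1}{\varphi^2}+\frac{1}{1-\varphi}$, whereas the paper obtains~\eqref{f_complete_explicit_SI} by taking the limit $p\to0^+$ in~\eqref{f_complete_explicit_with_p}; both routes are equally valid and elementary.
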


\begin{proof}
	 Because of the symmetry of the hypernetwork, $[S_{\Omega}]$  only depends on the number of nodes in $\Omega$, and not on the specific choices of nodes in $\Omega$. Therefore, we can denote by
	\begin{equation}
		\label{eq:[S^n]-complete-hyper}
		[S^n]:= [S_{\Omega}  \mid  |\Omega|=n]
	\end{equation}
	the probability that all the nodes in any specific subset of $n$~nodes are nonadopters (susceptibles) at time~$t$.
	Substituting~\eqref{eq:p_j_q_j_complete-homog-hyper} and~\eqref{eq:[S^n]-complete-hyper} in the master equations~\eqref{eq:master-eqs-general-hypergraph} gives 
\begin{subequations}
	\label{eqs:master-homog-complete-hyper}
	\begin{align}
		\label{eq:master-homog-complete-hyper}
		\begin{split}
			\frac{d[S^n]}{dt} = & -n\left(p+q\frac{(M-n)(M-n-1)}{(M-1)(M-2)}\right)[S^n]
			\\ & +nq\frac{(M-n)(M-n-1)}{(M-1)(M-2)}\left(2[S^{n+1}]-[S^{n+2}]\right),
		\end{split} 
		\qquad \,\,\,\, n=1,\dots,M-2,
		\\	
		\label{eq:master-homog-complete-M-1-M-hyper}
		\frac{d[S^n]}{dt} =&-np [S^n],\qquad\qquad\qquad\qquad\qquad
		\qquad\qquad\qquad \qquad\,\,\, n=M-1,M.
	\end{align}

\end{subequations}	
Holding $n$ fixed and letting $M  \to \infty$,~\eqref{eqs:master-homog-complete-hyper} approaches 
\begin{equation}
	\label{eqs:master-homog-complete-hyper-M=infty}
	\frac{d[S^n]}{dt} = -n\left(p+q\right)[S^n]
	+nq\left(2[S^{n+1}]-[S^{n+2}]\right),
	\qquad
	[S^n](0)=(1-I^0)^n,\quad n = 1,2 \dots.
\end{equation}
The substitution $[S^n] = [S]^n$ reduces the infinite system~\eqref{eqs:master-homog-complete-hyper-M=infty} to the single ODE
\begin{equation*}
		\frac{d[S]}{dt} =  -(p+q)[S]+q\left(2[S]^{2}-[S]^{3}\right)=-[S]\left(p+q(1- 2[S]+[S]^{2})\right),\quad  [S](0) = 1-I^0.
		\label{S_complete_ODE}
\end{equation*}
 Substituting $f = 1-[S]$ gives~\eqref{eq:f_complete_infty-hyper}.
 Using partial fractions,
 $$
 \frac{1}{(1-f)(p+qf^2)} = \frac{1}{p+q} \left(q\frac{1+f}{p+qf^2}+\frac{1}{1-f} \right).
 $$
 Integrating and using $f(0)=I^0$ gives~\eqref{f_complete_explicit_with_p}. Taking the limit $p\rightarrow 0^+$ in~\eqref{f_complete_explicit_with_p} gives~\eqref{f_complete_explicit_SI}.
 \end{proof}
 
 In Figure~\ref{netHMF}A we compute numerically the expected adoption level in the Bass model on a complete 3-body hypernetwork with $M=5000$ nodes. The result is nearly indistinguishable from the explicit solution~\eqref{f_complete_explicit_with_p} on an infinite hypernetwork.	A similar agreement is observed in Figure~\ref{netHMF}B for the expected infection level in the SI model.
 
We note that, from Eq.~\eqref{f_complete_explicit} it follows that the time for half of the population to become adopters (infected) in the Bass/SI model on infinite complete hypernetworks is
\begin{equation*}
	\label{T_half_general}
	T_{\frac{1}{2}}=\frac{1}{p+q}\Biggl(
	\sqrt{\frac{q}{p}}\left(\tan^{-1}
	\biggl(
	\frac{1}{2}\sqrt{\frac{q}{p}}
	\biggr)
	-\tan^{-1}\left(\sqrt{\frac{q}{p}}I^0\right)\right)+\ln
	\biggl(
	\frac{1-I^0}{1-\frac{1}{2}}\sqrt{\frac{p+\frac{q}{4}}{p+q(I^0)^{2}}}
	\biggr)
	\Biggr).
\end{equation*}
For example, in the Bass model $I^0=0$, and so
\begin{equation*}
	T_{\frac{1}{2}}^{\mathrm{Bass}}=\frac{1}{p+q}\Biggl(
	\sqrt{\tilde{q}}\biggl(\tan^{-1}\Bigl(\frac{1}{2}\sqrt{\tilde{q}}\Bigr)\Biggr)+\ln\biggl(2\sqrt{1+\frac{1}{4}\tilde{q}}\biggr)
	\Biggr),\qquad \qquad \tilde{q}:=\frac{q}{p}.
\end{equation*}
In the SI model $p=0$, and so
\begin{equation*}
	T_{\frac{1}{2}}^{\mathrm{SI}}=\frac{1}{q}\Biggl(\ln \left(\frac{1-I^0}{I^0}\right)
	+\frac{1}{I^0}-2 \Biggr).
\end{equation*}

\begin{figure}[t]
	\centering
	\begin{subfigure}[a]{0.44\linewidth}
	\centering
	\begin{overpic}[width=1\linewidth]{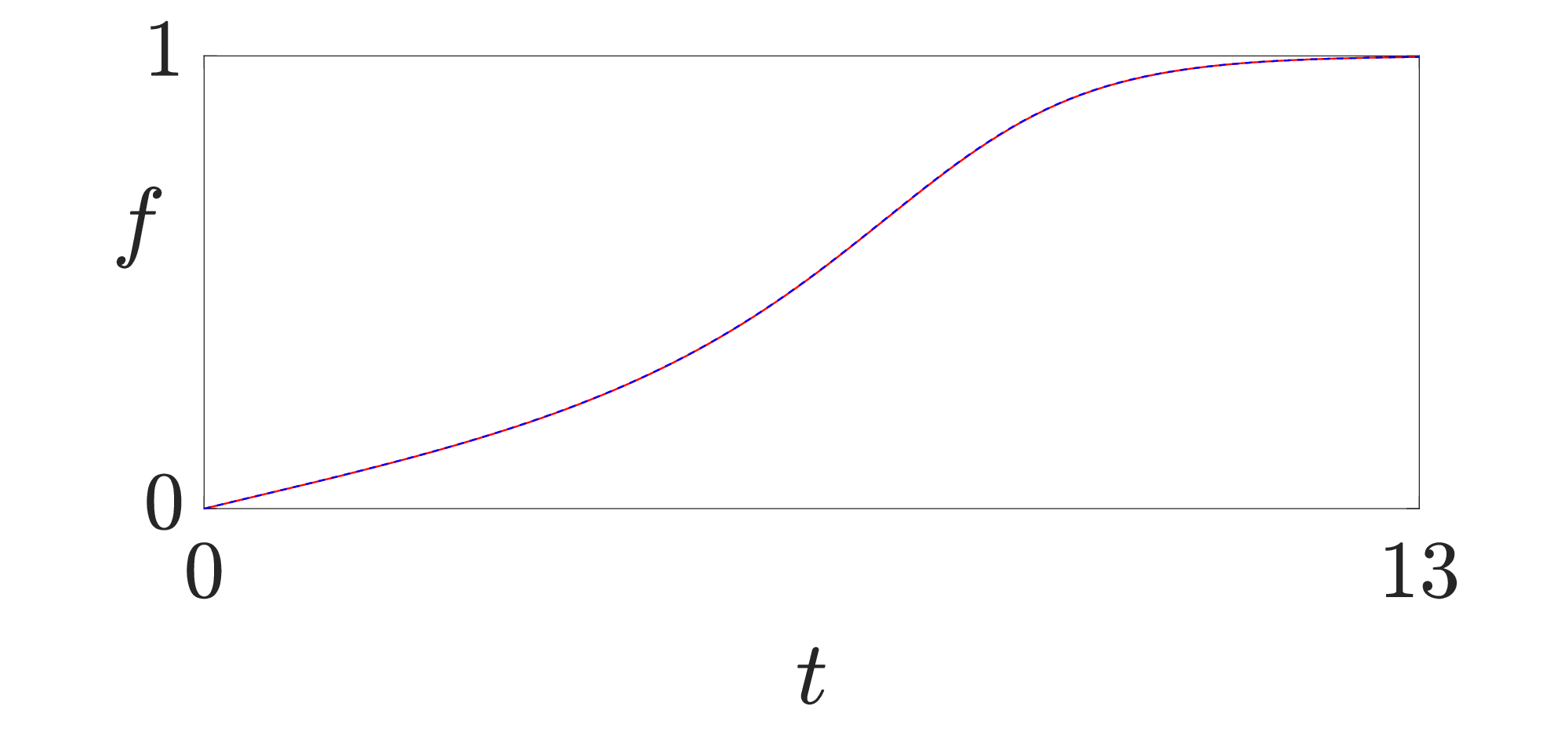}
	\put(15,38){(A)}
	\end{overpic}
	\end{subfigure}
	\hfill
	\begin{subfigure}[a]{0.44\linewidth}
	\centering
	\begin{overpic}[width=1\linewidth]{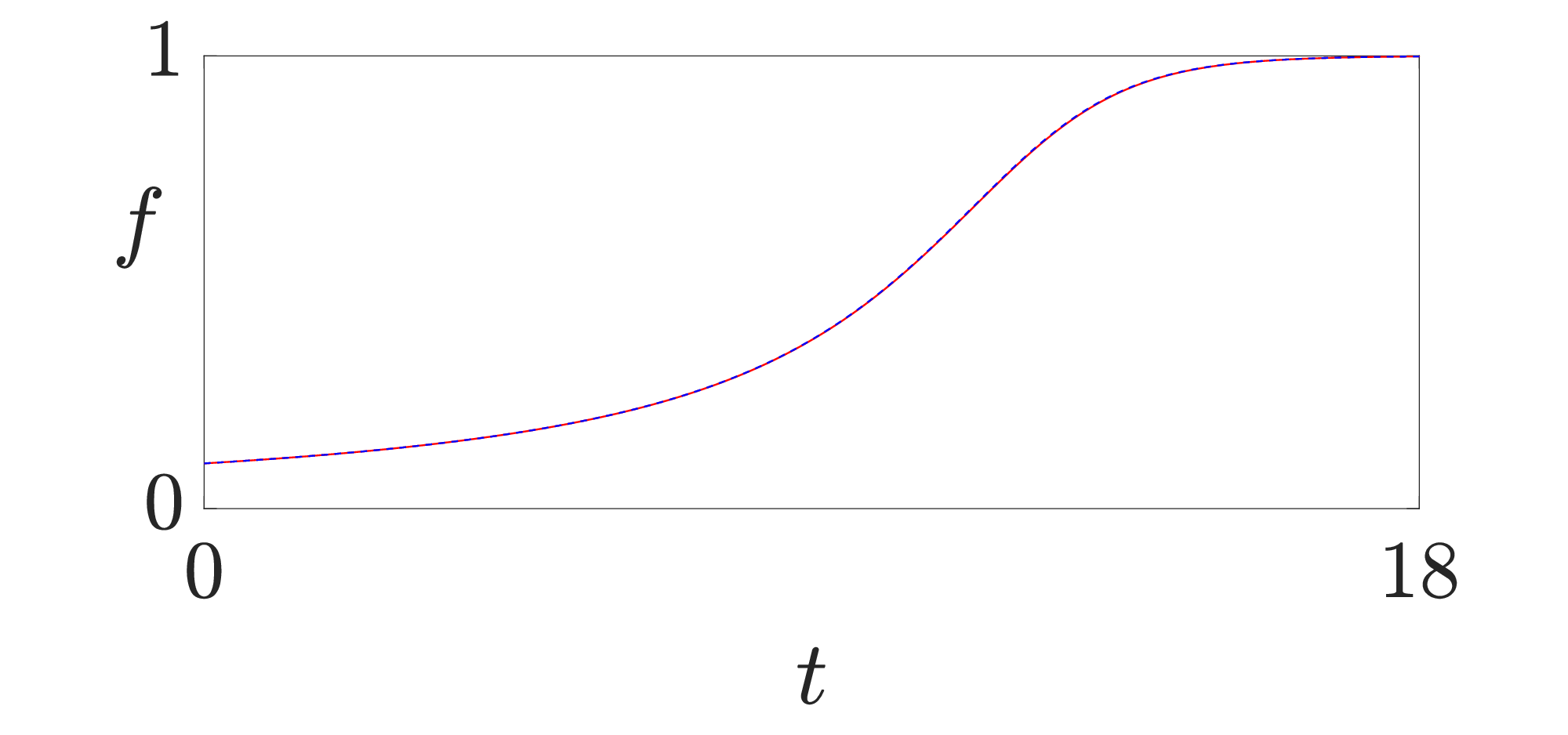}
	\put(15,38){(B)}
	\end{overpic}
	\end{subfigure}

	\caption {(A) The expected adoption level $f^{\text{complete}}(t)$ in the Bass model \textup{(\ref{Bass_SI_hypernetworks_general},\ref{eq:p_j_q_j_complete-homog-hyper})} on a complete 3-body hypernetwork with $M=5000$ nodes (solid red line) is nearly indistinguishable from the explicit expression~\eqref{f_complete_explicit} for $f^{\text{complete}}_{\infty}$ (blue dashed line). Here $p=0.05$, $q=1$, and $I^0=0$. (B)~Same as (A) for the SI model with $p=0$ and $I^0=0.1$.}
\label{netHMF}
\end{figure}

\section{Initial dynamics (Bass model)}
\label{initial_dynamics}

\mbox{\noindent
The expected adoption level in the Bass model on infinite complete hypernetworks satisfies, see~\eqref{eq:f_complete_infty-hyper},}
\begin{equation}
	\label{Bass_complete_hypernetwork}
	\frac{df}{dt} = (1-f)(p+qf^2), \qquad f(0) = 0,
\end{equation}
where $p,q>0$. Therefore, $\frac{df}{dt}$ {\em initially decreases} from $\frac{df}{dt}(0)=p$ to a local minimum, then increases to a global maximum, and finally decreases to zero (Fig~\ref{fCompletePhaseLine}A). This initial dynamics is qualitatively different from that on 2-body infinite complete networks, where
\begin{equation}
	\label{Bass_complete_network}
	\frac{df}{dt} = (1-f)(p+qf), \qquad f(0) = 0,
\end{equation}
see \eqref{2body_Bass_SI_infinite_complete}, and so $\frac{df}{dt}$ is an inverted parabola in $f$. In particular, if $q>p$, $\frac{df}{dt}$ increases from~$\frac{df}{dt}(0)=p$~to a global maximum and then decreases to zero (Fig~\ref{fCompletePhaseLine}B).

The initial decline of the adoption rate is not limited to complete hypernetworks. Indeed, it occurs for all 3-body hypernetworks \eqref{Set_notation}:

\begin{figure}[t]
	\centering
	\begin{subfigure}[b]{0.44\linewidth}
	\begin{overpic}[width=1\linewidth]{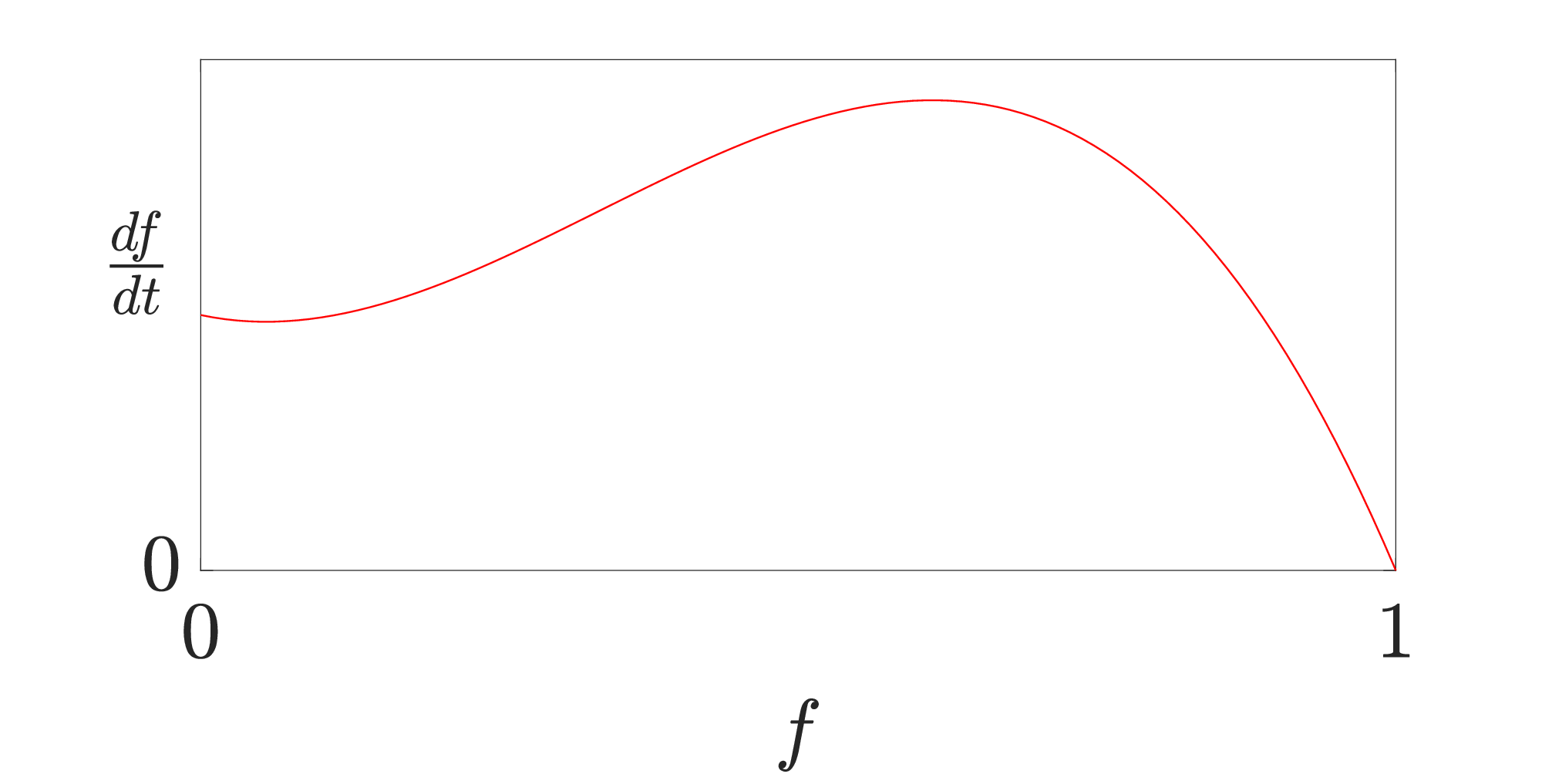}
		\centering
		\put(15,38){(A)}
	\end{overpic}
	\end{subfigure}
	\hfill
	\begin{subfigure}[b]{0.44\linewidth}
	\begin{overpic}[width=1\linewidth]{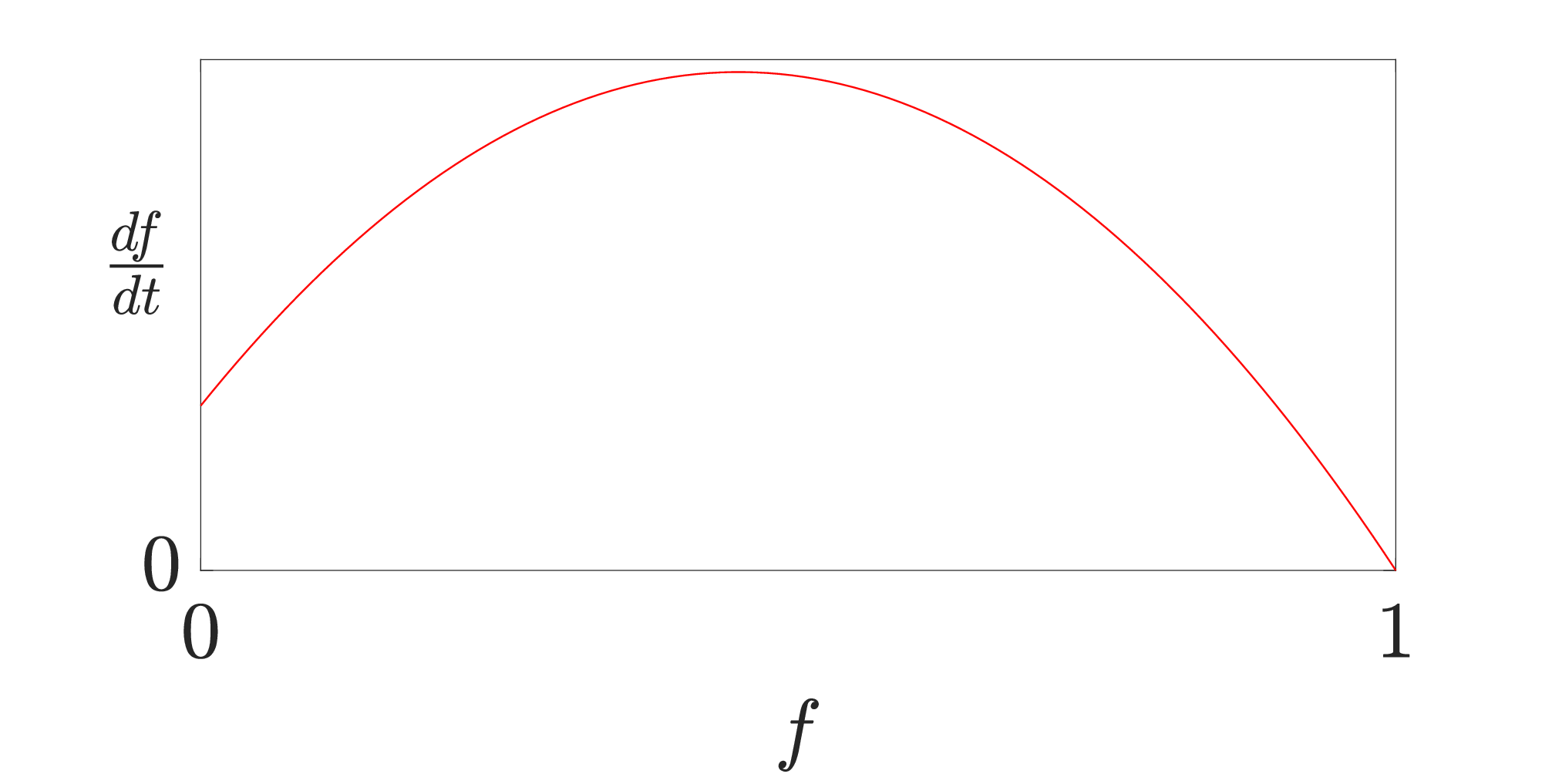}
	\centering
	\put(15,38){(B)}
	\end{overpic}
	\end{subfigure}

	\caption{ (A)~$\frac{df}{dt}$ as a function of $f$ for the Bass model \textup{(\ref{Bass_SI_hypernetworks_general},\ref{eq:p_j_q_j_complete-homog-hyper})} on an infinite complete 3-body hypernetwork. Here, $\frac{q}{p}=20$ and $I^0=0$. (B)~Same as (A) on an infinite complete 2-body network.   }
	\label{fCompletePhaseLine}
\end{figure}

\begin{theorem}
	Consider the Bass model~\eqref{Bass_SI_hypernetworks_general} on a 3-body hypernetwork. Then
	\begin{equation*}
		f'(0)>0,\qquad 
		f''(0)<0.
	\end{equation*}
\end{theorem}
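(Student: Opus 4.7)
The plan is to compute $f'(0)$ and $f''(0)$ directly from the master equations~\eqref{eqs:master-eqs-general-hypergraph}, exploiting the Bass-model feature that $I_m^0\equiv 0$, which by~\eqref{eq:master-eqs-genera-icl-hypergraph} forces $[S_\Omega](0)=1$ for every nonempty $\Omega\subset\mathcal{M}$. Since $f=1-\frac{1}{M}\sum_j [S_j]$ by~\eqref{eq:number_to_fraction-general}, it is enough to show $\sum_j \frac{d[S_j]}{dt}(0)<0$ and $\sum_j \frac{d^2[S_j]}{dt^2}(0)>0$.

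The first step is to derive the clean identity $\frac{d[S_\Omega]}{dt}(0)=-p_\Omega$ for every nonempty $\Omega$. Substituting $[S_\Omega](0)=[S_{\Omega,k_1}](0)=[S_{\Omega,k_2}](0)=[S_{\Omega,k_1,k_2}](0)=1$ into~\eqref{eq:master-eqs-general-hypergraph}, the bracket $[S_{\Omega,k_1}]+[S_{\Omega,k_2}]-[S_{\Omega,k_1,k_2}]$ equals $1$, so the internal-influence sums cancel exactly against the corresponding term in the coefficient of $[S_\Omega]$, leaving only $-p_\Omega$. Taking $\Omega=\{j\}$ and averaging over $j$ yields $f'(0)=\frac{1}{M}\sum_j p_j>0$, since $p_j>0$ for all $j$ in the Bass model.

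For the second derivative, I would differentiate the $|\Omega|=1$ master equation once more in $t$ and evaluate at $t=0$. Writing $A_j:=p_j+\sum_{k_1,k_2\in\{j\}^{\rm c}} q_{k_1,k_2,j}$, this gives
\[
\frac{d^2[S_j]}{dt^2}(0)=-A_j\,\frac{d[S_j]}{dt}(0)+\sum_{k_1,k_2\in\{j\}^{\rm c}} q_{k_1,k_2,j}\left(\frac{d[S_{j,k_1}]}{dt}(0)+\frac{d[S_{j,k_2}]}{dt}(0)-\frac{d[S_{j,k_1,k_2}]}{dt}(0)\right).
\]
Applying the identity from the previous step to the four sets $\{j\}$, $\{j,k_1\}$, $\{j,k_2\}$, $\{j,k_1,k_2\}$, the inclusion--exclusion bracket telescopes:
\[
-(p_j+p_{k_1})-(p_j+p_{k_2})+(p_j+p_{k_1}+p_{k_2})=-p_j,
\]
which is a \emph{constant} in $(k_1,k_2)$. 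The two contributions then collapse to $p_j A_j-p_j\!\sum q_{k_1,k_2,j}=p_j^2$, and summing over $j$ gives $f''(0)=-\frac{1}{M}\sum_j p_j^2<0$.

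The only thing requiring a moment of care is the telescoping cancellation in the second derivative: it eliminates all dependence on the internal rates $q_{k_1,k_2,j}$, which is precisely why the initial decline is \emph{universal} across arbitrary 3-body hypernetworks and depends only on the external rates $\{p_j\}$. No approximation, structural assumption, or closure is needed; everything follows from differentiating~\eqref{eq:master-eqs-general-hypergraph} twice at $t=0$ under the Bass initial condition $[S_\Omega](0)=1$.
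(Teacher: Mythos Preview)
Your proof is correct and follows essentially the same approach as the paper: both differentiate the single-node master equation at $t=0$, use the identity $\frac{d[S_\Omega]}{dt}(0)=-p_\Omega$ (valid for all nonempty $\Omega$ under the Bass initial condition), and exploit the telescoping of the $p$-terms to show that all $q$-dependence cancels, yielding $\frac{d^2[S_j]}{dt^2}(0)=p_j^2$. The only cosmetic difference is that you group the linear coefficient as $A_j=p_j+\sum q_{k_1,k_2,j}$ whereas the paper uses the equivalent form~\eqref{dS_j_dt} with the $q$-terms already moved into the bracket, so the intermediate telescoping value is $-p_j$ for you versus $0$ for the paper---but the final result and logic are identical.
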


\begin{proof}
	Substituting $\Omega = \{j\}$ in the master equations~\eqref{eq:master-eqs-general-hypergraph} gives
	\begin{equation}
	\label{dS_j_dt}
	\frac{d\left[S_{j}\right]}{dt}=-p_{j}\left[S_{j}\right]-\sum_{k_{1},k_{2}=1}^M q_{k_{1},k_{2},j}\Bigl(
	\left[S_{j}\right]-\left[S_{k_{1},j}\right]-\left[S_{k_{2},j}\right]+\left[S_{k_{1},k_{2},j}\right]
	\Bigr).
	\end{equation}
	Substituting $t=0$ in~\eqref{dS_j_dt} and using the initial conditions
	$
		\left[S_{j}\right](0)=\left[S_{k_{1},j}\right](0)=\left[S_{k_{2},j}\right](0)=\left[S_{k_{1},k_{2},j}\right](0)=1
	$
	 gives
	\begin{equation}
		\label{eq11}
		\frac{d\left[S_{j}\right]}{dt}(0)=-p_j.
	\end{equation}
	Similarly,
	\begin{equation}
		\label{eq12}
		\frac{d\left[S_{k_1,j}\right]}{dt}(0)=-p_{k_1}-p_j,\quad
		\frac{d\left[S_{k_2,j}\right]}{dt}(0)=-p_{k_2}-p_j,\quad
		\frac{d\left[S_{k_1,k_2,j}\right]}{dt}(0)=-p_{k_1}-p_{k_2}-p_j.
	\end{equation}
	Differentiating~\eqref{dS_j_dt}, substituting $t=0$ and using~\eqref{eq11} and~\eqref{eq12} yields
	\begin{align*}
		\label{eq-2}
		\frac{d^{2}\left[S_{j}\right]}{dt^{2}}(0)
		&=-p_{j}\frac{d\left[S_{j}\right]}{dt}(0)-\sum_{k_{1},k_{2}=1}^M q_{k_{1},k_{2},j}\left.\frac{d}{dt}\Bigl(
		\left[S_{j}\right]-\left[S_{k_{1},j}\right]-\left[S_{k_{2},j}\right]+\left[S_{k_{1},k_{2},j}\right]
		\Bigr)\right|_{t=0} \\
		&=p_{j}^{2}-\sum_{k_{1},k_{2}=1}^M q_{k_{1},k_{2},j}\left(-p_{j}+p_{k_{1}}+p_{j}+p_{k_{2}}+p_{j}-p_{k_{1}}-p_{k_{2}}-p_{j}\right)=p_{j}^{2},
	\end{align*}
Since $f=1-\frac{1}{M}\sum_{j=1}^M\left[S_{j}\right]$,
	$$\left.\frac{df}{dt}\right|_{t=0}=-\frac{1}{M}\sum_{j=1}^M \left.\frac{d}{dt}\left[S_{j}\right]\right|_{t=0}=\frac{1}{M}\sum_{j=1}^M p_{j}>0,$$
	and
	$$\left.\frac{d^{2}f}{dt^{2}}\right|_{t=0}=-\frac{1}{M}\sum_{j=1}^M\left.\frac{d^{2}}{dt^{2}}\left[S_{j}\right]\right|_{t=0}=-\frac{1}{M}\sum_{j=1}^M p_{j}^{2}<0.$$
\end{proof}

Since $f'(0)>0$ and $f''(0)<0$, the adoption rate decreases initially on all 3-body hypernetworks, regardless of the hypernetwork structure or of the ratio $\frac{q}{p}$. 

In order to understand the difference in the initial dynamics  in the Bass model between networks and hypernetworks, we note that as the adoption level $f$ increases, the adoption rate~$\frac{df}{dt}$ is influenced by 2 opposing mechanisms:
\begin{enumerate}
	\item The rate of internal influences increases.
	\item The rate of external influences decreases (since there are fewer non-adopters).
\end{enumerate}

The initial decrease of external adoptions is captured in~$f''(0)$. On 2-body networks, the initial increase of internal adoptions is also captured in~$f''(0)$. Therefore, the sign of $f''(0)$ depends on~$\frac{q}{p}$. On 3-body hypernetworks, however, a hyperedge becomes active only after two nodes adopt. Therefore, the initial increase of internal influence only enters to~$f'''(0)$. Hence,~$f'''(0)$ is always negative.

\section{Erd\H{o}s-Rényi hypernetworks}
\label{ER_hypernetworks}
In Erd\H{o}s-Rényi (ER) 3-body hypergraphs with $M$~nodes, for any three distinct nodes $k_1,k_2, k_3\in {\cal M}$, {\em  the hyperedge $\{k_1, k_2,k_3\}$ exists with probability~$\alpha$}, 
independently of all other hyperedges. Consider the Bass/SI model~\eqref{Bass_SI_hypernetworks_general} on ER hypernetworks, such that
\begin{subequations}
\label{eq:ER-network-dense}
\begin{equation}
	I_{k_1}^0=I^0, \qquad p_{k_1} \equiv p, \qquad 
	q_{k_1,k_2,k_3}=\frac{q}{\binom{M-1}{2}} \,e_{{k_1,k_2,k_3}},\qquad k_1,k_2,k_3\in{\cal M},
\end{equation} 
where ${\bf{E}}=\left(e_{{k_1,k_2,k_3}}\right)$ is the adjacency tensor, such that $e_{{k_1,k_2,k_3}}=1$ if there is a hyperedge connecting $\{k_1, k_2,k_3\}$, and $0$ otherwise.
The adoption rate of~$j$ is thus
	\begin{equation}
		\label{eq:lambda_j^ER}
		\lambda_{j}^{{\rm ER}}(t)=
		p+\frac{q}{\binom{M-1}{2}}\sum_{\{k_1,k_2\}\subset \mathcal{M}} e_{k_1,k_2,j}X_{k_1}(t)X_{k_2}(t).
	\end{equation}
\end{subequations}
\begin{lemma}
Let $f^{\rm ER}$ denote the expected adoption/infection level in the Bass/SI model \textup{(\ref{Bass_SI_hypernetworks_general}),\ref{eq:ER-network-dense})} on infinite ER 3-body hypernetworks. Then
\begin{equation}
	\label{f_ER_prediction}
f^{\rm ER}(t;p,q,\alpha,I^0)= f^{\rm complete}_\infty(t;p,\alpha q,I^0),
\end{equation}
where $f^{\rm complete}_\infty$ is the solution of~\eqref{eq:f_complete_infty-hyper}.
	\end{lemma}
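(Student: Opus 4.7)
My plan is to substitute the ER specification $q_{k_1,k_2,m} = \frac{q}{\binom{M-1}{2}}e_{k_1,k_2,m}$ into the general master equations of Theorem~\ref{thm:master-eqs-general-hyper}, and then average the resulting (random) ODE system over realizations of the adjacency tensor $\mathbf{E}$. Define $[\bar S_\Omega](t) := \mathbb{E}_{\mathbf{E}}[[S_\Omega](t)]$. Because an ER hypernetwork is exchangeable under permutations of the node labels, $[\bar S_\Omega]$ depends only on $n := |\Omega|$, so I can write $[\bar S^n](t)$, in direct analogy with \eqref{eq:[S^n]-complete-hyper}. The goal is to show that, after averaging and taking $M\to\infty$, these $[\bar S^n]$ satisfy exactly the system \eqref{eqs:master-homog-complete-hyper-M=infty} with $q$ replaced by $\alpha q$, so that Theorem~\ref{thm:f_complete_infty-hyper} can be invoked directly.

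The averaged master equations contain cross-correlation terms of the form $\mathbb{E}_{\mathbf{E}}[e_{k_1,k_2,m}[S_\Lambda]]$ for the various subsets $\Lambda \in \{\Omega,\Omega\cup\{k_1\},\Omega\cup\{k_2\},\Omega\cup\{k_1,k_2\}\}$ that appear in \eqref{eq:master-eqs-general-hypergraph}. The key step will be to argue that, as $M\to\infty$, these terms factorize as $\alpha\,[\bar S^{|\Lambda|}]$. The intuition is that the single hyperedge $\{k_1,k_2,m\}$ carries weight $\frac{q}{\binom{M-1}{2}} = O(1/M^2)$, so conditioning on its existence perturbs the trajectory of any $[S_\Lambda]$ by $O(1/M^2)$, which vanishes in the limit while the number of hyperedges touching node $m$ remains $\Theta(M^2)$. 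Once this factorization is accepted, summing over $k_1,k_2\in\Omega^{\rm c}$ produces the same combinatorial factor $\tfrac{(M-n)(M-n-1)}{(M-1)(M-2)}$ as in \eqref{eq:master-homog-complete-hyper}, multiplied by $\alpha q$ instead of $q$.

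With the reduced system in hand, the rest of the proof is immediate by transcription of the proof of Theorem~\ref{thm:f_complete_infty-hyper}: holding $n$ fixed and sending $M\to\infty$ yields
\[
\frac{d[\bar S^n]}{dt} = -n(p+\alpha q)[\bar S^n] + n\,\alpha q\bigl(2[\bar S^{n+1}] - [\bar S^{n+2}]\bigr),\qquad [\bar S^n](0)=(1-I^0)^n,
\]
and the ansatz $[\bar S^n] = [\bar S]^n$ collapses this to a single scalar ODE which, under $f = 1-[\bar S]$, is exactly \eqref{eq:f_complete_infty-hyper} with $q$ replaced by $\alpha q$. Hence $f^{\rm ER}(t;p,q,\alpha,I^0) = f^{\rm complete}_\infty(t;p,\alpha q,I^0)$, as claimed.

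The hard part will be justifying the decorrelation $\mathbb{E}_{\mathbf{E}}[e_{k_1,k_2,m}[S_\Lambda]] \to \alpha\,[\bar S^{|\Lambda|}]$ in the infinite limit, since $[S_\Lambda]$ genuinely depends on the random adjacency tensor through the entire realization of the dynamics. A rigorous route would be a coupling argument bounding the difference between the ER dynamics conditioned on $e_{k_1,k_2,m}=1$ versus $e_{k_1,k_2,m}=0$ by $O(1/M^2)$ on compact time intervals; alternatively one could expand the correlation into a hierarchy and show that the non-product terms are of lower order in $1/M$. Consistent with the overall style of the paper, I would present this as a natural decorrelation limit of the averaged hierarchy and, as a sanity check, note that the resulting prediction~\eqref{f_ER_prediction} can be corroborated by direct simulation on finite ER hypernetworks.
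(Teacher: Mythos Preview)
Your approach is sound in spirit and would reach the same conclusion, but the paper takes a considerably shorter and more direct route. The paper's argument is explicitly labeled an \emph{Informal proof}: it simply applies the mean-field approximation at the level of the adoption rate~\eqref{eq:lambda_j^ER}, replacing each random indicator $e_{k_1,k_2,j}$ by its expectation~$\alpha$, so that $\lambda_j^{\rm ER}(t;p,q) \approx \lambda_j^{\rm complete}(t;p,\alpha q)$, and then invokes Theorem~\ref{thm:f_complete_infty-hyper} directly. There is no averaging of the master-equation hierarchy, no exchangeability reduction to $[\bar S^n]$, and no explicit decorrelation argument.

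Your route through the averaged master equations is more in keeping with the rest of the paper's methodology, and it has the virtue of making explicit exactly where the heuristic step lies (the factorization $\mathbb{E}_{\mathbf{E}}[e_{k_1,k_2,m}[S_\Lambda]] \to \alpha\,[\bar S^{|\Lambda|}]$). But this extra structure does not buy additional rigor: the decorrelation you correctly flag as ``the hard part'' is precisely the mean-field replacement the paper makes in one line at the level of~$\lambda_j$. Both arguments are heuristic at the same point; the paper's version is simply more compact and more transparent about the approximation being made, and both ultimately defer to numerical simulation (Figure~\ref{fig:er_graphs_figure}) for validation.
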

\begin{proof}[Informal proof]
	When the  ER hypernetwork is large, we can apply the {\em mean-field} approximation, and replace $e_{k_1,k_2,j}$ with its expected value $\mathbb{E}\left(e_{k_1,k_2,j}\right)=\alpha$, i.e.,
	$$
	\sum_{\{k_1,k_2\}\subset \mathcal{M}} e_{k_1,k_2,j}X_{k_1}(t)X_{k_2}(t) \approx
	 \sum_{\{k_1,k_2\}\subset \mathcal{M}} \alpha \, X_{k_1}(t)X_{k_2}(t)=\alpha \binom{N(t)}{2}.
	$$
	Therefore, using~\eqref{eq:lambda_j(t)-Bass-model-hypergraph-3-body-complete} and~\eqref{eq:lambda_j^ER}, 
	$$
	\lambda_{j}^{{\rm ER}}(t;p,q) \approx \lambda_{j}^{{\rm complete}}(t;p,\alpha q). 
	$$
     Hence, the result follows from Theorem~\ref{thm:f_complete_infty-hyper}.
\end{proof}

In Figure~\ref{fig:er_graphs_figure} we compare the expected adoption/infection level $f^\mathrm{ER}$ in the Bass and SI models on ER hypernetworks with $M=1500$ nodes and the theoretical prediction~\eqref{f_ER_prediction}.
We let $q:=\frac{1}{\alpha}$, so that $f^\mathrm{complete}_{\infty}(t;p,\alpha q,I^0)$ remains unchanged as we vary $\alpha$. These calculations show that $f^\mathrm{ER}$ closely matches the theoretical prediction~\eqref{f_ER_prediction} for $\alpha = 0.5$, but not for $\alpha = 10^{-5}$. This is to be expected, since the mean-field approximation is derived for dense networks, and not for sparse ones. At $\alpha = 10^{-3}$ the hypernetwork is not dense, yet $f^\mathrm{ER}$ is still in good agreement with~\eqref{f_ER_prediction}. This can be attributed to the average hyperdegree $\langle k \rangle = \alpha \binom{M-1}{2}= 10^{-3} \binom{1499}{2} \approx~1100$, which is still large.

\begin{figure}[t]
	\centering
	\begin{overpic}[width=1\linewidth]{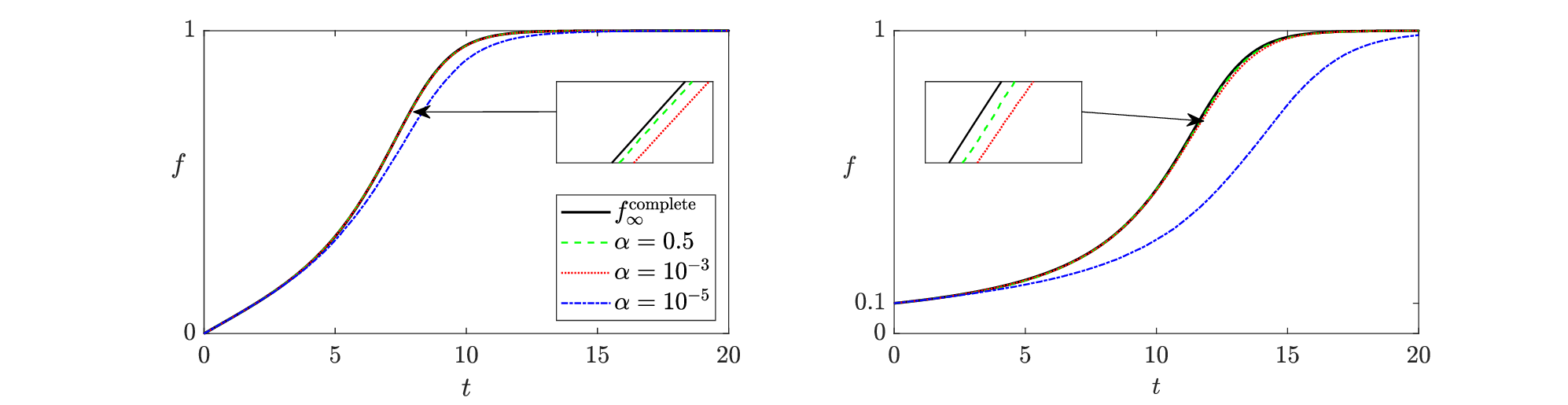}
		\centering
		\put(14,20.5){(A)}
		\put(59,20.5){(B)}
	\end{overpic}
	\caption{(A) The expected adoption level $f^{\text{ER}}$ in the Bass model \eqref{Bass_SI_hypernetworks_general} on ER hypernetworks with $\alpha = 0.5$ (dashed green line), $\alpha = 10^{-3}$ (red dashed-dotted line) and $\alpha = 10^{-5}$ (blue dotted line). The explicit solution $f^{\text{complete}}_\infty(t;p,\alpha q)$ (black solid line), see \eqref{f_complete_explicit}, is nearly identical to \(f^\text{ER}\) with $\alpha = 0.5$ and $\alpha = 10^{-3}$. Here, $M=1500$, $q=\frac{1}{\alpha}$, $p=0.05$, and $I^0=0$. (B)~Same as (A) for the SI model with $p=0$ and $I^0=0.1$.}
	\label{fig:er_graphs_figure}
\end{figure}

\section{Infinite hyperlines}
\label{infinite_hyperlines}
The expected adoption/infection level in the Bass/SI model on an infinite line satisfies~\cite{fibich2022compartmental,Gadi_Monotone_convergence}
\begin{equation*}
	\frac{df}{dt}=(1-f)\Bigl(p+q(1-e^{-pt})\Bigr),\qquad f(0)=I^0,
\end{equation*}
and is given by the explicit formula $f^\mathrm{1D}_\mathrm{Bass}(t;p,q):=1-e^{-(p+q)t+q\frac{1-e^{-pt}}{p}}$ for the Bass model, and by~$f^\mathrm{1D}_\mathrm{SI}\left(t;q,I^0\right):=1-\left(1-I^0\right)e^{-I^0qt}$ for the SI model. In this section, we derive the corresponding expressions for the Bass/SI model on the infinite homogeneous 3-body hyperline, where
\begin{subequations}
	\label{eq:p_j_q_j_circle-homog-hyper}
\begin{equation}
	I_j^0=I^0,\qquad p_j\equiv p,\qquad
	q_{k_1,k_2,j}=
	\begin{cases}
		q^{\rm L}, & \ {\rm if }\ j=k_1-1=k_2-2,\\
		q^{\rm R}, & \ {\rm if }\ j=k_1+1=k_2+2,\\
		0, &   \ {\rm otherwise },
	\end{cases}
	\qquad  \qquad k_1,k_2,j \in \mathbb{Z}.
\end{equation}
The adoption rate of~$j$ is, see~\eqref{eq:lambda_j_3body(t)}, 
\begin{equation}
	\label{eq:lambda_j(t)-Bass-model-hypergraph-hyperline}
	\begin{split}
		\lambda_j^{\rm 1D}(t)
		&= p+q^L X_{j-1}(t)  X_{j-2}(t) + q^R X_{j+1}(t)  X_{j+2}(t).
	\end{split}
\end{equation}
\end{subequations}
Note that when $q^{\rm L} \not=q^{\rm R}$, the hyperline is anisotropic. 
\begin{theorem}
	Let $f^{\rm 1D}$ denote the expected adoption /infection level in the Bass/SI model \textup{(\ref{Bass_SI_hypernetworks_general},\ref{eq:p_j_q_j_circle-homog-hyper})} on the infinite 3-body hyperline. Then $f^{\rm 1D}$ is the solution of
	\begin{equation}
		\label{f_1D_ODE}
	\frac{df^{\rm 1D}}{dt}  = \biggl(p+q \left(   1-(1-I^0)e^{-p t}\right)    ^2\biggr)\left(1-f^{\rm 1D}\right),
	\quad f^{\rm 1D}(0)=I^0,
	\end{equation}
	where $	q = q^{\rm L}+q^{\rm R}$, and is given explicitly by
	\begin{equation}
	\label{f_1D_explicit}
	f^{{\rm 1D}}=\begin{cases}
		1-\left[S^{0}\right]\exp\Biggl(-\left(p+q\right)t+\frac{q}{2p}\Bigl(1-e^{-pt}\Bigr)\Bigl(4-\left[S^{0}\right]-\left[S^{0}\right]e^{-pt}\Bigr)\left[S^{0}\right]\Biggr), & \mathrm{if}\,\, p>0,\\
		1-\left[S^{0}\right]e^{-\left(1-\left[S^{0}\right]\right)^{2}qt}, & \mathrm{if}\,\, p=0,
	\end{cases}
	\end{equation}
and $\left[S^{0}\right]=1-I^0$.
\end{theorem}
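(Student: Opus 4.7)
My plan is to exploit translation invariance together with a closure of the master equations~\eqref{eqs:master-eqs-general-hypergraph}. Call $\mathcal{C}$ the class of finite $\Omega\subset\mathbb{Z}$ whose interior gaps all have length at most one (equivalently, $\Omega^c$ contains no two consecutive integers strictly between $\min\Omega$ and $\max\Omega$). The coupling~\eqref{eq:p_j_q_j_circle-homog-hyper} makes only the two endpoints of a consecutive triple $\{a,a+1,a+2\}$ possible targets of that triple's hyperedges, so a nonzero contribution $q_{k_1,k_2,\Omega}$ with $k_1,k_2\in\Omega^c$ requires two consecutive sites of $\Omega^c$; for $\Omega\in\mathcal{C}$ this configuration can only appear outside $[\min\Omega,\max\Omega]$. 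A short bookkeeping then gives exactly two crossing pairs per $\Omega\in\mathcal{C}$, namely $(\min\Omega-1,\min\Omega-2)$ of rate $q^L$ and $(\max\Omega+1,\max\Omega+2)$ of rate $q^R$, so $\sum_{k_1,k_2\in\Omega^c} q_{k_1,k_2,\Omega}=q^L+q^R=q$. Moreover all six sets $\Omega\cup\{k_i\}$ and $\Omega\cup\{k_1,k_2\}$ appearing on the right of~\eqref{eq:master-eqs-general-hypergraph} remain in $\mathcal{C}$, so the subsystem $\{[S_\Omega]:\Omega\in\mathcal{C}\}$ is closed.

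Next I would substitute the shape-independent ansatz $[S_\Omega](t)=A(t)\,B(t)^{|\Omega|}$ (for $\Omega\in\mathcal{C}$), with $A(0)=1$ and $B(0)=1-I^0$ to match~\eqref{eq:master-eqs-genera-icl-hypergraph}. Using $2B-B^2=1-(1-B)^2$, the closed master equation collapses to
\begin{equation*}
\frac{A'}{A}+|\Omega|\,\frac{B'}{B}=-|\Omega|\,p-q(1-B)^2,
\end{equation*}
and matching as a polynomial in $|\Omega|$ forces $B'=-pB$, hence $B(t)=(1-I^0)e^{-pt}$, together with $A'=-qA(1-B)^2$. Uniqueness for the linear master equations identifies the ansatz with the true $[S_\Omega]$ on $\mathcal{C}$. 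Since $\{j\}\in\mathcal{C}$, we get $[S_j]=AB$, and differentiating $f^{\rm 1D}=1-[S_j]$ with the two ODEs for $A$ and $B$ recovers~\eqref{f_1D_ODE} directly. A by-product is the identity $\mathbb{P}(I_{j\pm 1}\cap I_{j\pm 2}\mid S_j)=(1-B)^2=(1-(1-I^0)e^{-pt})^2$, which is the indifference needed to close the master equation for $[S_j]$.

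Finally, to reach~\eqref{f_1D_explicit} I would integrate the scalar linear equation $d[S_j]/[S_j]=-(p+q(1-B)^2)\,dt$. The only nonroutine computation is $\int_0^t(1-(1-I^0)e^{-ps})^2\,ds$, which under $u=(1-I^0)e^{-ps}$ becomes $\frac{1}{p}\int(1/u-2+u)\,du$ and produces contributions linear in $t$, in $1-e^{-pt}$, and in $1-e^{-2pt}$; a short rearrangement using $(1+e^{-pt})(1-e^{-pt})=1-e^{-2pt}$ yields the factored form in~\eqref{f_1D_explicit}. The SI case $p=0$ follows by specialization: $B\equiv 1-I^0$ makes $(1-B)^2=(I^0)^2$ constant, so $[S_j]=(1-I^0)\exp(-q(I^0)^2 t)$, matching the second line of~\eqref{f_1D_explicit}.

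The main obstacle is the closure argument of the first paragraph: verifying that every set produced by the master equation, starting from the seed $\{j\}$, stays in $\mathcal{C}$ and that $\sum q_{k_1,k_2,\Omega}$ equals the same $q$ for every such $\Omega$. Both reduce to the single combinatorial fact that in~\eqref{eq:lambda_j(t)-Bass-model-hypergraph-hyperline} only the two endpoints of a consecutive triple are targets of its hyperedges, so a single missing site never supports a new crossing. This is the 3-body hyperline analogue of the tree-indifference phenomenon used to solve the Bass/SI model on 2-body infinite lines.
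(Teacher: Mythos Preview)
Your proposal is correct and follows essentially the same route as the paper: the class $\mathcal{C}$ you introduce is exactly the paper's family of ``clusters'' $\Omega_n$ (sets with $m_{i+1}-m_i\in\{1,2\}$), the boundary-only contribution $q^{\rm L}+q^{\rm R}$ is the same observation, and the separable ansatz reduces the closed infinite system to a scalar ODE. The only cosmetic difference is that the paper writes the ansatz as $[S^n]=e^{-(n-1)pt}(1-I^0)^{n-1}[S^{\rm 1D}]$ with the $B$-factor already identified, whereas you keep $A(t)B(t)^{|\Omega|}$ generic and recover $B'=-pB$ by matching the $|\Omega|$-coefficient; this is the same computation in a slightly more self-contained packaging.
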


\begin{proof}
	Let $\Omega_n = \{m_1, \dots, m_n\}$ be a ``cluster'' of $n$~nodes, such that
	$m_1 < \dots < m_n$ and $m_{i+1}-m_i \in \{1,2\}$ for $i=1, \dots, n-1$. Note that if $k$ is a node such that $m_i<k<m_{i+1}$, then $k$ cannot affect the adoption rate of the nodes~in~ $\Omega_n$~(see~Figure~\ref{hyperline_example}~for~an~illustration). Therefore, $[S_{\Omega _n}]$  only depends on $n$, and not on the specific choices of nodes in $\Omega _n$.
	The master equation for $\Omega_n$ reads,~see~\eqref{eqs:master-eqs-general-hypergraph} and~\eqref{eq:p_j_q_j_circle-homog-hyper}
	\begin{align*}
		\begin{split}
			\frac{d[S_{\Omega_n}]}{dt}  = -(np+q^{\rm L}+q^{\rm R} ) [S_{{\Omega_n}}]
			&
			+q^{\rm L}
			\left([S_{\Omega_n,m_1-1} ]+[S_{\Omega_n,m_1-2 }]-[S_{\Omega_n,m_1-2,m_1-1}]\right) 
			\\
			&+q^{\rm R}
			\left([S_{\Omega_n, m_n+1}]+[S_{\Omega_n, m_n+2}]-[S_{\Omega_n, m_n+1,m_n+2}]\right),
		\end{split}
	\end{align*}
	subject to $\left[S_{\Omega_{n}}\right]\left(0\right)=\left(1-\left[I_{0}\right]\right)^n$. 
	Therefore, if we denote 
	$[S^n]:= [S_{{\Omega_n}}]$, then
	\begin{align}
		\label{S_n_line_ODE}
		\frac{d\left[ S^n \right]}{dt} = -(np+q ) [S^n]
		+q 		\left(2[S^{n+1} ]-[S^{n+2}]\right),\quad [S^n](0)=(1-I^0)^n, \quad n = \ 1,2, \dots, 
	\end{align}
We introduce the ansatz 
\begin{equation*}
	[S^n] = e^{-(n-1)pt}\left(1-I^0\right)^{n-1}[S^{\rm 1D}],\qquad n=1,2,...,
\end{equation*}
and substitute it in~\eqref{S_n_line_ODE} to reduce the infinite system~\eqref{S_n_line_ODE} to the single ODE
	\begin{equation}
		\label{S_1D_ODE}
		\frac{d\left[S^{\rm 1D}\right]}{dt}  = -\Biggl(p+q \Bigl(
		1-e^{-p t}\left(1-I^0\right)
		\Bigr)
		^2\Biggr)[S^{\rm 1D}],
		\quad [S^{\rm 1D}](0)=1-I^0.
	\end{equation}
Substituting $f^{\rm 1D}=1-[S^{\rm 1D}]$ in~\eqref{S_1D_ODE} gives~\eqref{f_1D_ODE}. Solving~\eqref{f_1D_ODE} gives~\eqref{f_1D_explicit} for $p>0$. Taking the limit $p\rightarrow 0^+$ of~\eqref{f_1D_explicit} yields~\eqref{f_1D_ODE} for $p=0$.
\end{proof}
\begin{figure}[t]
	\centering
		\begin{overpic}[width=1\linewidth]{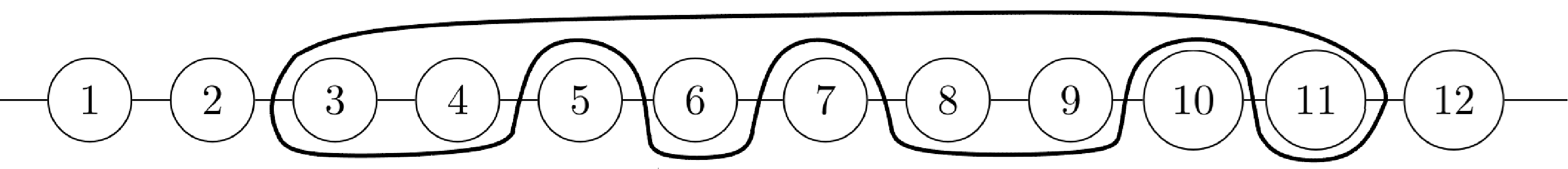}
		\centering
		\put(63,12){$\Omega _6$}
	\end{overpic}
	\caption{The set $\Omega_6 = \{3,4,6,8,9,11\}$ where $n=6$, $m_1 = 3$ and $m_6 = 11$. The nodes~$5,7$ and $10$ cannot affect $\left[S_{\Omega_6}\right]$.}
	\label{hyperline_example}
\end{figure}
Figure~\ref{ring} confirms that the expected adoption level in the Bass and SI models on the 3-body hyperline agrees with the theoretical prediction $f^{\rm 1D}$, see~\eqref{f_1D_explicit}.
\begin{figure}[H]
	\begin{subfigure}[a]{0.45\linewidth}
		\begin{overpic}[width=0.8\linewidth]{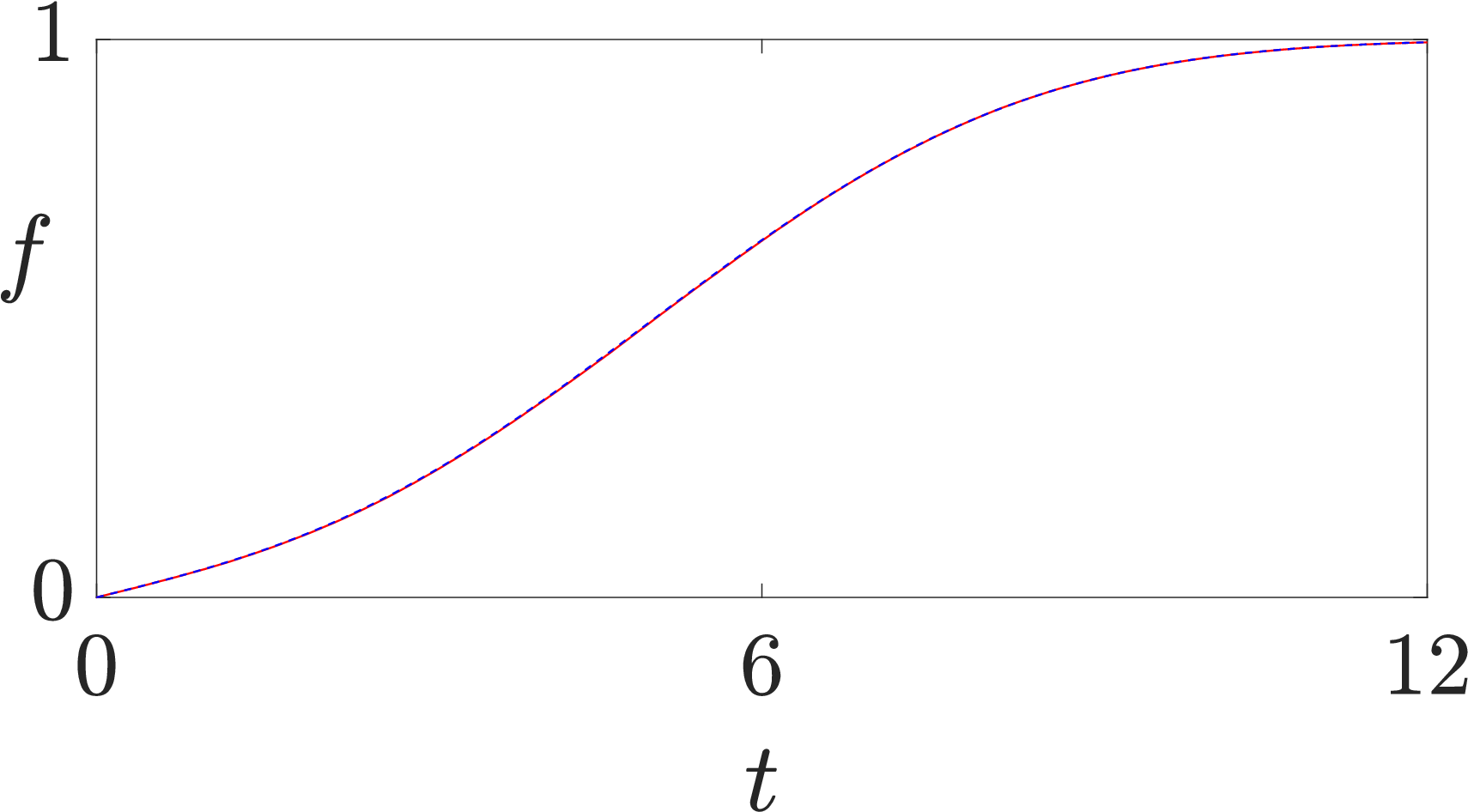}
			\centering
			\put(10,42){(A)}
		\end{overpic}
	\end{subfigure}
	\hfill
	\begin{subfigure}[a]{0.45\linewidth}
		\begin{overpic}[width=1\linewidth]{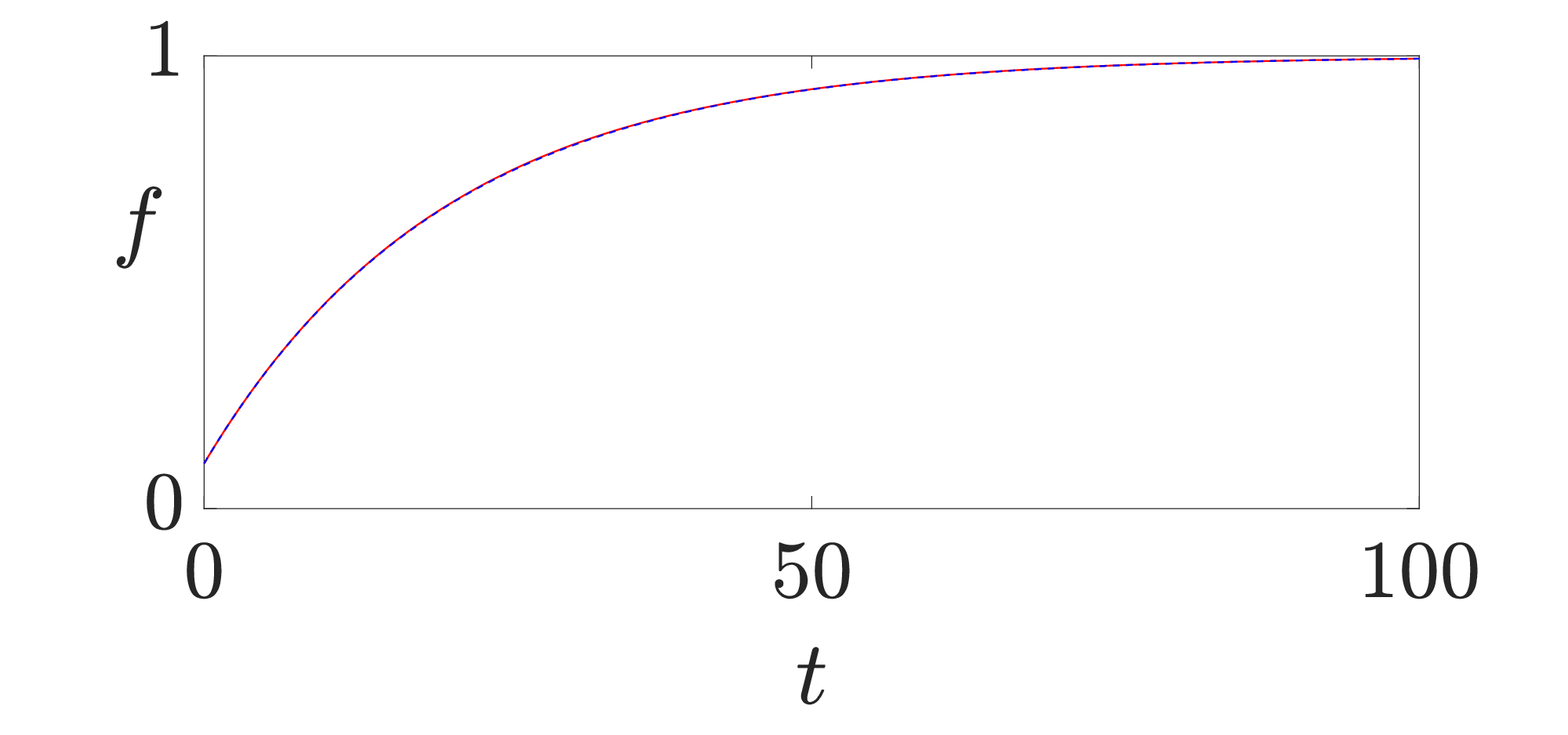}
			\centering
			\put(15,35){(B)}
		\end{overpic}
	\end{subfigure}
	\caption{The expected adoption level (solid red line) in the Bass/SI model \textup{(\ref{Bass_SI_hypernetworks_general},\ref{eq:p_j_q_j_circle-homog-hyper})} on a 3-body hyperline is indistinguishable from the theoretical prediction $f^{\rm 1D}$ (blue dashed line), see~\eqref{f_1D_explicit}. Here, $M=5000$, $q^L = 2$, $q^R = 3$, and $q=q^L+q^R$. (A)~Bass model:  $p=0.05$ and $I^0=0$. (B)~SI model: $p=0$ and $I^0=0.1$.}
	\label{ring}
\end{figure}

\section{Final Remarks}
\label{sec:final}

In this study we derived the master equations for the Bass and SI models on general hypernetworks with 3-body interactions. We then used these equations to obtain explicit exact solutions for several types of hypernetworks. In general, both the properties of explicit solutions, and the techniques used to derive them mimic those on two-body networks. In fact, the only qualitative difference between the two cases is the initial decline of the adoption rate (see section \ref{initial_dynamics}).

Extending our approach to hypernetworks with $N$-body interactions is straightforward. It is also natural to apply our approach to combinations of higher-order interactions (e.g. two- and three-body interactions). Finally, our methodology can be extended to other models on hypernetworks, such as SIS, SIR, and Bass-SIR, see \cite{Bass-SIR-analysis-17,Bass-SIR-model-16}.

\bibliographystyle{plain}


\end{document}